\definecolor{burgundy}{rgb}{0.5, 0.0, 0.13}
\definecolor{crimson}{rgb}{0.86, 0.08, 0.24}\usepackage{mathtools,extarrows}
\newtheorem{definition}{Definition}
\newtheorem{proposition}{Proposition}
\crefname{figure}{Figure}{Figure}
\definecolor{mygreen}{rgb}{0.0, 0.55, 0.0}
\definecolor{blue-violet}{rgb}{0.54, 0.17, 0.89}
\definecolor{airforceblue}{rgb}{0.36, 0.54, 0.66}
\definecolor{darkblue}{rgb}{0.0, 0.0, 0.55}
\title{Computing Envy-Free up to Any Good (EFX) Allocations via Local Search}
\author{Simina Br\^anzei\footnote{Purdue University. Work done in part at Google. This research was supported in part by US National Science Foundation  grant CCF-2238372. E-mail: simina.branzei@gmail.com. The implementation of the algorithm and the scripts with data used to generate the images are available at \url{https://github.com/domnica/EFX}.}}
\begin{document}
\maketitle

\begin{abstract}
We present a simple local search algorithm for computing EFX (envy-free up to any good) allocations of $m$ indivisible goods among $n$
agents with additive valuations. EFX is a compelling fairness notion, and whether such allocations always exist remains a major open question in fair division.

Our algorithm employs simulated annealing with the total number of EFX violations as an objective function together with a  single-transfer neighborhood structure to move through the space of allocations.
It found an EFX allocation in all the instances tested, which included thousands of randomly generated inputs, and scaled  to settings with hundreds of agents and/or thousands of items. 
The algorithm's simplicity, along with its strong empirical performance makes it a simple   benchmark for evaluating future approaches.

On the theoretical side, we provide a potential function for identical additive valuations, which ensures that any strict-descent procedure under the single-transfer neighborhood ends at an EFX allocation. This represents an alternative proof of existence for identical valuations.
\end{abstract}

\textbf{Keywords.} fair division, indivisible goods,
envy-freeness-up-to-any-good (EFX), local search,  simulated annealing. 

\section{Introduction.}

We study the fair division of a set of indivisible goods among agents with additive valuations. Formally, let $M = \{1, \ldots, m\}$ denote the set of goods and $N = \{1, \ldots, n\}$ the set of agents. Each agent $i \in N$ has a nonnegative value $v_{i,j}$ for each good $j \in M$, and their utility for a bundle $S \subseteq M$ is additive: $u_i(S) = \sum_{j \in S} v_{i,j}$. An \emph{allocation} is a partition of $M$ into disjoint bundles $A_1, \ldots, A_n$ assigned to the agents.

A fundamental goal is to ensure that the allocation is fair. One of the best known fairness notions is \emph{envy-freeness (EF)}, which requires that no agent prefers someone else’s bundle: $u_i(A_i) \ge u_i(A_j)$ for all $i, j \in N$. However, envy-free allocations are not guaranteed to exist for indivisible goods. For instance, if there is a single indivisible item to be allocated (e.g. a necklace), an agent who does not get the item will envy the one who does.

Since envy-freeness is often unattainable, alternative fairness notions have been proposed that adapt more naturally to discrete settings.
A widely studied notion is \emph{envy-freeness up to one good (EF1)}, which requires that envy can be eliminated by removing some item from the envied bundle. A stronger notion is \emph{envy-freeness up to any good (EFX)}~\cite{Caragiannis19}, which demands that for every pair of agents $i$ and $j$, and for every good $g \in A_j$, agent $i$ does not envy agent $j$ once $g$ is removed: $u_i(A_i) \ge u_i(A_j \setminus \{g\})$.

More generally, an allocation is said to be \emph{$(1-\epsilon)$-EFX}, for some  $\epsilon \in [0,1)$ if for all $i, j \in N$ and all $g \in A_j$, it holds that $u_i(A_i) \ge (1-\epsilon) \cdot u_i(A_j \setminus \{g\})$. For general sub-additive valuations, it is known that $(1/2)$-EFX allocations exist ~\cite{PlautRoughgarden20}, and better approximations are possible for restricted valuation classes.  

The existence of \emph{exact} EFX allocations\footnote{That is, $(1-\epsilon)$-EFX with $\epsilon = 0$.} has been conjectured. Proving or disproving this conjecture represents a major open problem in fair division.
Existence of exact EFX allocations for additive valuations has  been proved for two agents~\cite{PlautRoughgarden20}, three agents~\cite{chaudhury2020efx}, and certain special classes of  valuations (see, e.g., \cite{Amanatidis20,HVGN025}).

\subsection{Our Contribution.} 

Our main contribution is to design  a simple local search algorithm for computing EFX allocations on instances with additive valuations. 
The algorithm employs simulated annealing, using  the number of EFX violations as an objective function, and explores the space of allocations through a single-item transfer neighborhood structure.

Our experiments show that the algorithm finds EFX allocations consistently and efficiently across a wide range of instance sizes, with uniform and correlated valuations. In fact, the algorithm converged to an EFX allocation on every instance we tested. We also observe that the runtime is  affected by the item-to-agent ratio $m/n$ and by  how correlated the agents' valuations are. Implementation details and reproducibility resources are described in the experimental results section.

On the theoretical side, we provide a potential function for identical additive valuations, which is defined for  each allocation $A$ as the variance across agents of their total bundle values. We  show that if an allocation $A$ is not EFX, then some single-item transfer strictly decreases $\Phi$. Consequently, any strict-descent process on $\Phi$ with the single-transfer neighborhood structure ends at an EFX allocation, giving an alternative existence proof for identical additive valuations.

\subsection{Related Work.}

The study of fair allocation of indivisible goods has seen significant growth over the last decade, particularly concerning relaxations of envy-freeness such as EF1 and EFX. For a comprehensive survey, see \cite{AmanatidisABFLMVW23}.

\paragraph{EFX existence.}
Envy-freeness up to any good (EFX) has become a central open problem in fair division due to its intuitive appeal. Despite significant attention, the general existence of EFX allocations remains unresolved. Prior work has tackled this challenge by focusing on specific scenarios or relaxed versions of EFX. For example, \cite{PlautRoughgarden20} initially established EFX existence for two agents and for any number of agents with identical (not necessarily additive) valuations. Extending beyond two agents, \cite{chaudhury2020efx} confirmed the existence of EFX allocations for three agents with additive valuations and showed they can be computed in pseudo-polynomial time. \cite{Amanatidis20} showed that EFX allocations exist for bi-valued instances.   \cite{HVGN025} showed that EFX allocations exist for $n$ agents whose valuations fall in one of three types.

Work by \cite{akrami2024efx}  advanced this line of research by proving the existence of EFX allocations in scenarios where two agents possess general monotone valuations, and at least one agent has an MMS-feasible valuation. Their approach simplifies prior methodologies by eliminating the need for complex constructions such as envy or champion graphs.

\paragraph{Approximate EFX.} On another front, approximate EFX allocations with limited unallocated goods (charity) have been extensively explored. Initial work by \cite{DGKPS14} introduced EFX with charity and showed that meaningful approximations are possible. Later, \cite{chaudhury2021little} leveraged combinatorial arguments based on the ``rainbow cycle number'' (RCN) to achieve $(1-\varepsilon)$-EFX allocations with bounded charity. \cite{berger2022almost} show that every 4-agent instance with additive valuations has an EFX allocation  that leaves at most a single item unallocated, and every setting with $n$ additive valuations has an EFX allocation with at most $n-2$ unallocated items. \cite{berger2022almost} extend these results to nice cancelable valuations (which include  additive, unit-demand, budget-additive and multiplicative valuations). \cite{GHLVV23} analyze submodular valuations and design a polynomial time algorithm based on local search to find  an allocation that is simultaneously 1/2-EFX and approximates the optimal Nash social welfare within a  factor of $(8+\epsilon)$ for every $\epsilon > 0$.

\paragraph{EF1 and proportionality.} We briefly mention several other prominent fairness notions for indivisible goods. Envy-freeness up to one good (EF1) is guaranteed to exist for arbitrary monotone valuations~\cite{lipton2004approximately} and has been extensively studied due to its practical applicability and computational tractability. Reachability of EF1 allocations via sequential exchanges was studied in \cite{IKSY24}, where the question is whether one EF1 allocation can be reached from another EF1 allocation via a sequence of exchanges such that every intermediate allocation is  EF1. EF1 allocations for mixed (divisible and indivisible) goods has  been studied in \cite{BEI2021103436}.

Proportionality requires that each agent has utility at least $1/n$ of the utility they would have if they received the entire set of items. While proportionality cannot always be ensured for indivisible goods, relaxations of it can. There exist  various proportionality relaxations such as Maximin share (MMS), Proportionality up to one good (PROP1), and Proportionality up to any good (PROPx). Each of these has different guarantees and computational complexities. \cite{CSegalWarut_weighted_fairness} study weighted versions of these fairness notions.

\paragraph{Competitive equilibrium, share-based notions, and stochastic settings.} The competitive equilibrium from equal incomes has also been considered as a method for allocating indivisible goods (see, e.g., \cite{budish2011combinatorial,BL14,OPR14,BHM15}), with extensions to competitive equilibria from unequal budgets \cite{babaioff2021competitive} or random incomes \cite{CERI25}. Share based notions of fairness for indivisible goods have also been proposed  (see, e.g., \cite{babaioff2021competitive,BF22,BFRV24}).

The existence of fair allocations in stochastic settings (e.g. with random valuations) has also been studied (see, e.g., \cite{DGKPS14,manurangsi2020envy,Psomas_EFX_stochastic}). 

\paragraph{Local search in other domains.} Stochastic local search methods have also been extensively studied in domains outside fair division, such as Boolean satisfiability. For example, the WalkSAT algorithm for satisfiability~\cite{papadimitriou1991selecting,schoning1999probabilistic} selects an  unsatisfied clause uniformly at random and flips the value of a random variable  from that clause.  In \cite{selman1993local} the algorithm additionally performs  a random walk over variables that appear in unsatisfied clauses to escape local minima.

\section{Definitions.}

In order to introduce the algorithm, we first define the notion of EFX violation.

\begin{definition}[\bf EFX violation]
Let  $A = (A_1, \ldots, A_n)$ be an allocation. A tuple  $(i,j,g)$, where $i,j \in N$ and $g \in M$, represents an EFX violation if $u_i(A_j \setminus \{g\}) > u_i(A_i)$.
\end{definition}

 Our measure of progress  is the total number of EFX violations.

\begin{definition}[\bf EFX violation count] \label{def:main_objective}
Given an allocation $A = (A_1, \ldots, A_n)$, the EFX violation count at $A$ is: 
\[
f(A) = \sum_{i \in N} \sum_{j \in N \setminus \{i\}} \sum_{g \in A_j} c(i,j,g), 
\]
where $
c(i,j,g) = \mathbb{1}\{u_i(A_j \setminus \{g\}) > u_i(A_i)\} \,.$
\end{definition}
Clearly, an allocation $A$ is EFX if and only if $f(A)=0$. 

We also define a neighborhood structure in the space of allocations.

\begin{definition}[\bf Single-transfer neighborhood] \label{def:neighbor}
Let $A = (A_1, \ldots, A_n)$ be an allocation.
An allocation $A' = (A_1', \ldots, A_n')$ is a \emph{neighbor} of $A$ if it differs from $A$ by the transfer of a single item from its current owner in $A$ to another agent. 
\end{definition}

\section{Algorithm.}

Our algorithm explores the space of allocations via local moves, starting from an initial allocation $A$ drawn uniformly at random.

At each step, the algorithm generates a random neighbor $A'$ of the current allocation $A$ by selecting an item $j$ uniformly at random and reassigning it from its owner $i$ in $A$ to another agent chosen uniformly at random from $N \setminus \{i\}$. It then compares the number of EFX violations at $A$ and $A'$:
\begin{itemize}
    \item If $f(A') < f(A)$, the algorithm moves to $A'$.
    \item Otherwise, it moves to $A'$ with a small probability $p = \exp(-\Delta / T)$, where $\Delta = f(A') - f(A)$ and $T$ is a temperature parameter that decreases over time.
\end{itemize}
This process repeats until an allocation with zero violations is found.

The non-zero transition probability to allocations with higher $f$ values is necessary: there exist instances where the function $f$ has strictly positive local minima.

\paragraph{Pseudocode.} Below, we present the pseudocode for our algorithm.
Parameter values used in the implementation (such as the initial temperature $T$) are shown in brackets when introducing each parameter.

\begin{verbatim}
Input: 
    n - number of agents
    m - number of items
    v - valuation matrix, where v[i][j] is agent i's value for item j
Output:
    An EFX allocation (if one is found)

1. Repeat until an EFX allocation is found:

    a. Initialize a random allocation A, by assigning each item to a uniformly
    random agent.
    
    b. Set the temperature T to an initial value (e.g., 5.0) and define a minimum 
    threshold (e.g., Tmin = 0.0001).

    c. While T > Tmin and the allocation is not yet EFX:
    
        i. For a number of steps at the current temperature (e.g., 100 × n × m):
            
            - Count the EFX violations at the current allocation (i.e., f(A)).
            
            - Propose a new allocation A' by selecting an item uniformly at random, 
            removing it from its current owner in A, and reassigning it to a 
            different agent (chosen uniformly at random).
            
            - Let D be the change in the number of EFX violations if the item is
            reassigned (i.e., D = f(A') - f(A))
            
            - If D < 0 (i.e., fewer violations) accept the move  (i.e., update the 
            allocation to A').
            
            - Else, accept the move with probability exp(-D / T).

        ii. Decrease the temperature  (e.g., T ← 0.99 × T).

2. Return the EFX allocation once one is found (i.e., the number of violations
is zero).
\end{verbatim}

\section{Experimental Results.}

We begin with uniform random instances, where each entry $v_{i,j}$ of the valuation matrix  is drawn independently and uniformly from the interval $[0,1]$.

The key goals are to evaluate whether the algorithm can find EFX allocations and measure the runtime required for convergence as a function of the number of agents and items.
To the best of our knowledge, no efficient, publicly available algorithm exists for computing EFX allocations in these settings. For context, the search space for $n=4$ and $m=1000$ is already computationally prohibitive for  exhaustive search methods.

\paragraph{Experimental environment.} Experiments were conducted within a KVM virtual machine running Debian GNU/Linux (kernel 6.12.27). The virtual machine was provisioned with 24 threads from an AMD EPYC 7B12 host processor and 94 GB of RAM. 
The  code was written in Python 3.13.5 and is available at \cite{code_github}.

\subsection{Scaling with the Number of Items (Fixed Agents)}

In this section we check the performance of the algorithm for various combinations of agents ($n$) and items ($m$).

Table~\ref{tab:combined_performance} summarizes these initial results, presenting the average runtime (left) and the corresponding number of algorithmic steps (right) for several configurations. The ``number of steps'' refers to the total count of neighboring allocations generated before an EFX allocation is found.

An  observation from Table~\ref{tab:combined_performance} is that the runtime is not monotonic with respect to the number of items. For example, when $n=15$, the algorithm is substantially slower for $m=150$ items than for  $m=750$ items. This finding suggests that the performance of the algorithm depends more critically on the item-to-agent ratio ($m/n$) than on the absolute number of items. To investigate this phenomenon more closely, the next section provides a detailed analysis focused on the $m/n$ ratio for a fixed number of agents.

\begin{table}[h!]
    \centering
    \begin{small}

    \begin{subtable}[t]{.48\textwidth}
        \centering
        \begin{tabular}{|c||c|c|c|c|}
        \hline
        \multirow{2}{*}{\makecell{Agents\\ ($n$)}} & \multicolumn{4}{c|}{Items} \\
        \cline{2-5}
        & 1000 & $10n$ & $50n$ & $5 n^2$ \\
        \hline \hline 
        4  & \makecell{0.41 \\ $\pm$ 0.24} & \makecell{0.004 \\ $\pm$ 0.003} & \makecell{0.02 \\ $\pm$ 0.01} & \makecell{0.007 \\ $\pm$ 0.004} \\
        \hline
        10 & \makecell{1.46 \\ $\pm$ 0.53} & \makecell{0.34 \\ $\pm$ 0.26} & \makecell{0.42 \\ $\pm$ 0.13} & \makecell{0.42 \\ $\pm$ 0.13} \\
        \hline
        15 & \makecell{2.22 \\ $\pm$ 0.63} & \makecell{15.24 \\ $\pm$ 18.72} & \makecell{1.34 \\ $\pm$ 0.43} & \makecell{2.75 \\ $\pm$ 0.94} \\
        \hline
        20 & \makecell{2.73 \\ $\pm$ 0.67} & \makecell{599.70 \\ $\pm$ 405.04} & \makecell{2.73 \\ $\pm$ 0.67} & \makecell{8.63 \\ $\pm$ 2.37} \\
        \hline
        \end{tabular}
        \caption{Average runtime (in seconds).} 
        \label{tab:runtime_small_m}
    \end{subtable}
    \hfill
    \begin{subtable}[t]{.48\textwidth}
        \centering
        \begin{tabular}{|c||c|c|c|c|}
        \hline
        \multirow{2}{*}{\makecell{Agents\\ ($n$)}} & \multicolumn{4}{c|}{Items} \\
        \cline{2-5}
        & 1000 & $10n$ & $50n$ & $5 n^2$ \\
        \hline \hline 
        4  & \makecell{797 \\ $\pm$ 472} & \makecell{135  \\ $\pm$ 79} & \makecell{232 \\ $\pm$ 136} & \makecell{135 \\ $\pm$ 71} \\
        \hline
        10 & \makecell{1902 \\ $\pm$ 697} & \makecell{3485 \\ $\pm$ 2737} & \makecell{1169 \\ $\pm$ 375} & \makecell{1169  \\ $\pm$ 375} \\
        \hline
        15 & \makecell{2544 \\ $\pm$ 714} & \makecell{99186 \\ $\pm$ 121372} & \makecell{2177 \\ $\pm$ 694} &  \makecell{2802 \\ $\pm$ 948}\\
        \hline 
        20 & \makecell{3181 \\ $\pm$ 780} & \makecell{2941904 \\ $\pm$ 1990182} & \makecell{3181 \\ $\pm$ 780} & \makecell{4783 \\ $\pm$ 1324} \\
        \hline 
        \end{tabular}
        \caption{Average number of steps.} 
        \label{tab:runtime_small_m_steps}
    \end{subtable}
    
    \end{small}
    
    \caption{Performance metrics for finding EFX allocations across various problem sizes. Each entry represents the average over 100 instances with agent valuations drawn uniformly at random. The value below each mean indicates one standard deviation.}
    \label{tab:combined_performance}
\end{table}

\subsection{Effect of the Item-to-Agent Ratio}

To investigate the effect of item availability, we conducted an experiment with a fixed number of agents ($n=15$) and varied the number of items from $m=3$ to $m=1050$. This design allowed us to observe the algorithm's performance as it transitioned from a state of item scarcity ($m < n$) to one of item abundance ($m \gg n$).
The evolution of the average runtime as the number of items grows is plotted  in Figure~\ref{fig:visualization_15_agents_critical_region_runtime}. The corresponding plot for the step count is in the appendix (Figure~\ref{fig:visualization_15_agents_critical_region_steps}). 

The primary finding is that the problem is hardest for the algorithm not when the number of items is largest, but within a volatile region at low item-to-agent ratios. As shown in Figure~\ref{fig:visualization_15_agents_critical_region_runtime}, the runtime does not exhibit a single sharp peak; instead, it oscillates with the maximum trending upward, reaching its global maximum near $m/n \approx 3.5$.


Beyond this hard region, as the number of items increases further, the problem becomes easier, and the runtime dramatically decreases to a minimum around $m/n \approx 25$. Subsequently, the runtime begins to increase again, but very slowly.
This non-monotonic behavior strongly suggests that the  hardness is driven by the problem's combinatorial structure rather than merely the size of the input.

\begin{figure}
\centering
    \begin{subfigure}[b]{\textwidth}
        \centering
        \includegraphics[width=\textwidth]{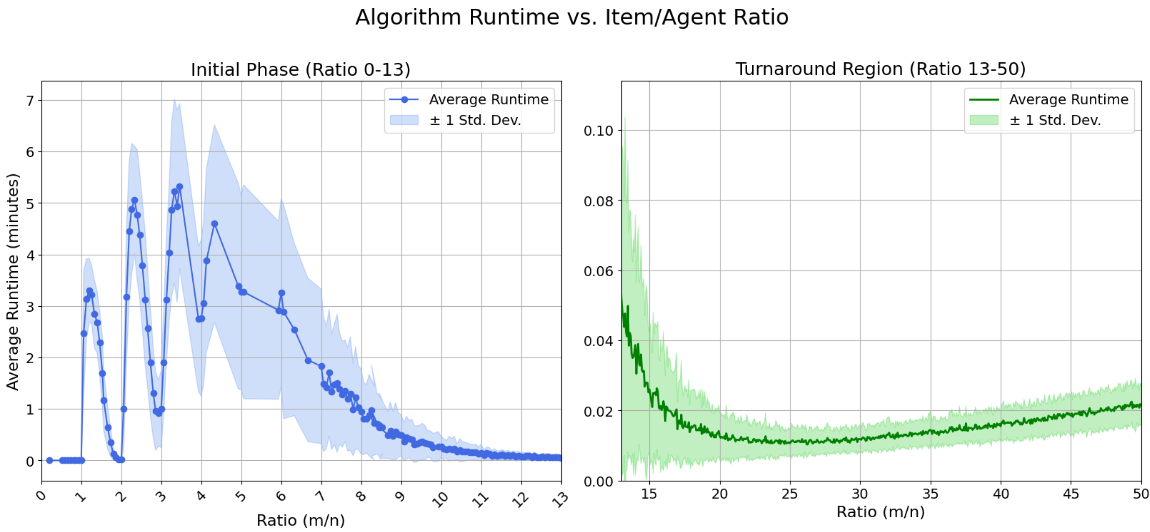} 
        \label{fig:15_agents_closeup}
    \end{subfigure}\\ 
    \begin{subfigure}[b]{0.55\textwidth}
        \centering
        \includegraphics[width=\textwidth]{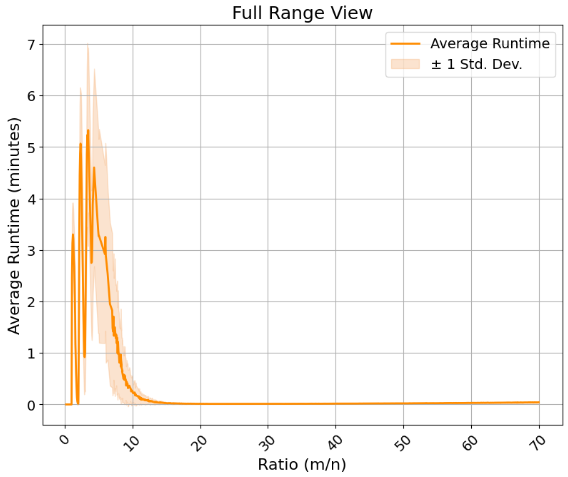} 
        \label{fig:15_agents_fullview}
    \end{subfigure}
\caption{Average runtime (in minutes) vs. the item-to-agent ratio 
($m/n$) for finding an EFX allocation with 
$n=15$ agents. Each data point shows the average runtime over 100 independent instances where agent valuations were drawn uniformly at random. The shaded region indicates one standard deviation. The figure presents three panels: a close-up of the region $m/n \in [0, 13]$, a second close-up of the region $m/n \in [13, 50]$, and a full-range view.}
\label{fig:visualization_15_agents_critical_region_runtime}
\end{figure}

Prior research from \cite{DGKPS14} has shown  that for random valuations, an exact envy-free allocation exists with high probability when $m \in \Omega(n \log{n})$ and is  unlikely to exist when $m \in n + o(n)$. Follow-up work by \cite{manurangsi2020envy} improved the existence bound from \cite{DGKPS14}, showing that envy-free allocations exist with high probability when $m \geq 2n$ and $n$ divides $m$.

Understanding whether this computational barrier stems from the structure of the problem or the nature of the algorithm could shed light on the broader complexity of computing EFX allocations.
For example, when $m=n+1$, the problem can in fact be solved efficiently using a round-robin approach: let the agents, in the order $1, 2, \ldots, n-1, n, n$, choose their favorite items among the remaining ones. The resulting allocation is EFX.

\subsection{Scaling with the Number of Agents (Fixed Items)}

We next evaluate the algorithm's scalability with respect to the number of agents, $n$. For this experiment, we fixed the number of items to a large value ($m=10^4$) and increased the number of agents from $n=4$ to $n=100$. The results are shown  in Figure~\ref{fig:performance_10K_items}. 

\medskip 
\begin{figure}[h!]
    \centering
    \includegraphics[scale=0.86]{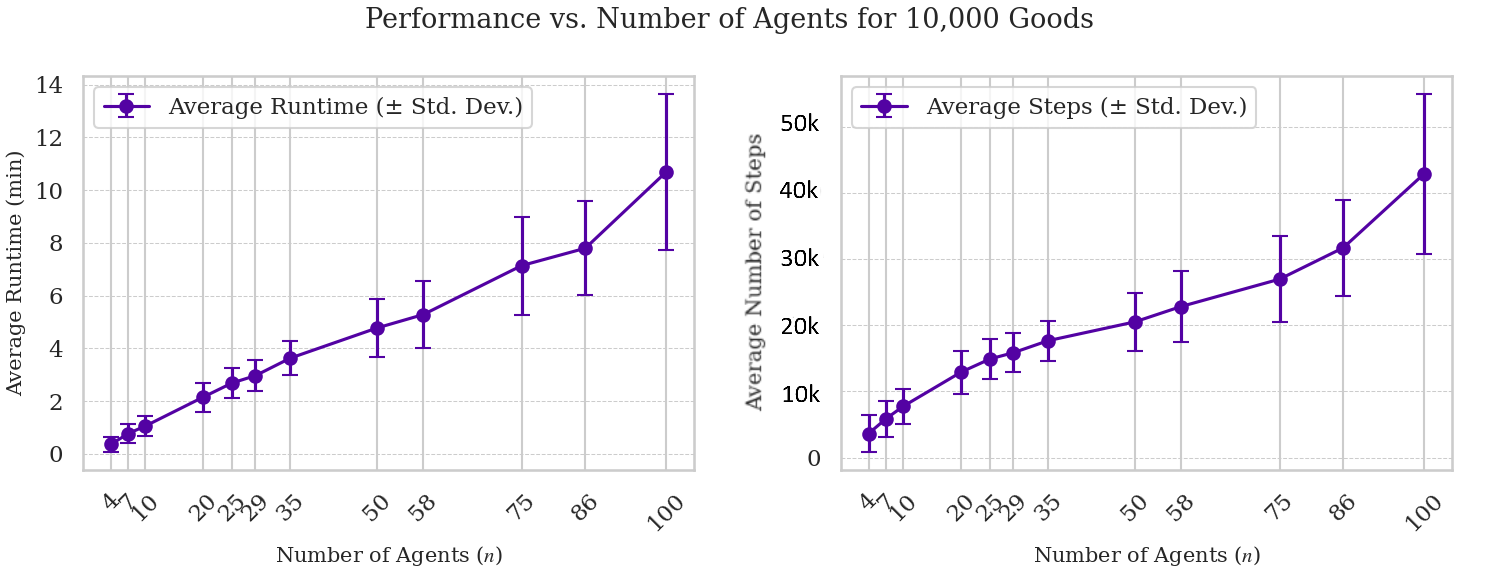}
    \caption{Performance of the algorithm with $m = 10,000$ items as the number of agents grows from $n=4$ to $n=100$. For each value of $n$ displayed, the data point is obtained by taking  the average of 100 runs with uniformly random valuations. The left panel shows the average runtime and standard deviation (in minutes), while the right panel shows the average and standard deviation in steps.}
    \label{fig:performance_10K_items}
\end{figure}

\subsection{Correlated Valuations}

We also test correlated valuations using  a simple ``shared value + individual term'' model
parametrized by a correlation strength $\rho \in [0,1]$.

For each item $j \in [m]$ we draw a common value  $w_j \sim \mathrm{Unif}[0,1]$. 
Independently, for each agent $i \in [n]$  we also draw an individual term  $u_{i,j} \sim \mathrm{Unif}[0,1]$. 
The value of agent $i$ for item $j$ is then defined as 
\[
v_{i,j} \;=\;   \rho\,w_j + (1-\rho)\,u_{i,j}\,.
\]
This convex combination keeps $v_{i,j}\in[0,1]$, gives i.i.d.\ uniform valuations when $\rho=0$, 
and yields identical valuations across agents when $\rho=1$. 
Intermediate $\rho$ smoothly interpolates between heterogeneous and identical preferences.

\paragraph{Empirical behavior.} In our experiments, the effect of the correlation strength $\rho$ depends on the item-to-agent ratio $m/n$.

We first test an item-abundant regime with $n=8$ and $m=160$, corresponding to  a ratio $m/n =20$. This is a setting where instances with uniform valuations are computationally straightforward. For correlated valuations, the mean runtime increases with the correlation strength $\rho$, peaking sharply near $\rho =0.95$ due to a heavy-tailed runtime distribution, before dropping again for identical valuations ($\rho=1$).
See Figure~\ref{fig:visualization_8_agents_160_items_correlated} for runtimes, with full numeric values in the appendix (Table~\ref{tab:correlated_runtimes} and \ref{tab:correlated_steps}).

For our second case, we select parameters $n=15$ and $m=52$. This creates an item-scarce regime ($m/n \approx 3.46$) specifically chosen because this exact setting represents one of the most computationally challenging peaks for uniform valuations (as shown in Figure \ref{fig:visualization_15_agents_critical_region_runtime}). 
Here the trend is reversed for small $\rho$, with the hardest regime occurring at $\rho=0$. See 
Figure~\ref{fig:visualization_15_agents_52_items_correlated} for runtimes, with full numeric values in the appendix (Table~\ref{tab:correlated_runtimes_15_agents_52_items} and \ref{tab:correlated_steps_15_agents_52_items}).

\begin{figure}[h!]
\centering
    \begin{subfigure}[b]{0.48\textwidth}
        \centering
        \includegraphics[scale=0.46]{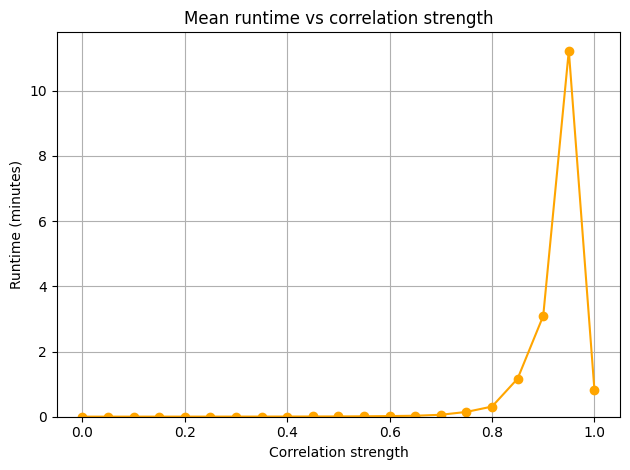}
        \label{fig:8_agents_160_items_mean}
    \end{subfigure}
    \begin{subfigure}[b]{0.48\textwidth}
        \centering
        \includegraphics[scale=0.46]{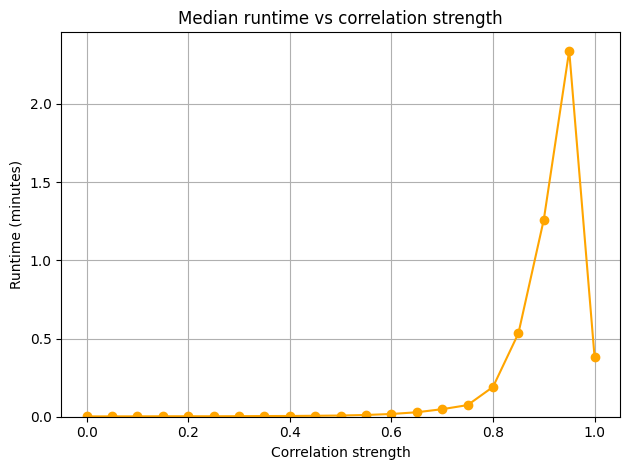} 
        \label{fig:8_agents_160_items_median}
    \end{subfigure}
    \caption{Runtime (in minutes) for finding an EFX allocation with $n=8$ agents and $m=160$ items as the correlation strength grows from 0 to 1.  For each correlation strength, the results are averaged over 100 trials. The left figure shows the mean and the right figure shows the median.}
\label{fig:visualization_8_agents_160_items_correlated}
\end{figure}

\begin{figure}[h!]
\centering
    \begin{subfigure}[b]{0.48\textwidth}
        \centering
        \includegraphics[scale=0.46]{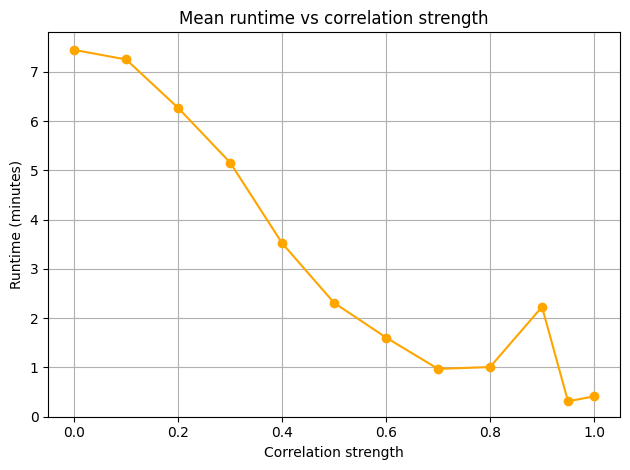}
        \label{fig:15_agents_52_items_mean}
    \end{subfigure}
    \begin{subfigure}[b]{0.48\textwidth}
        \centering
        \includegraphics[scale=0.46]{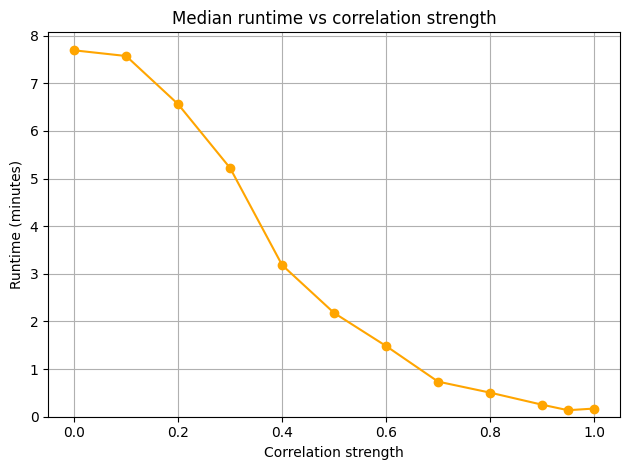} 
        \label{fig:15_agents_52_items_median}
    \end{subfigure}
    \caption{Runtime (in minutes) for finding an EFX allocation with $n=15$ agents and $m=52$ items as the correlation strength grows from 0 to 1.  For each correlation strength, the results are averaged over 100 trials. The left figure shows the mean and the right figure shows the median.}
\label{fig:visualization_15_agents_52_items_correlated}
\end{figure}

\section{Potential Function for  Identical Valuations} \label{sec:identical}

When valuations are identical, each good $g\in M$ has a common value $w_g$; that is $v_{i,g}=w_g$  $\forall i\in N$. 

We find a potential function $\Phi$ that assigns a value to each allocation $A$ and has the property that, if $A$ is not EFX, there exists a single-item transfer that can be made to strictly decrease $\Phi$. Consequently, any strict-descent procedure on $\Phi$ with the single-transfer neighborhood terminates at an EFX allocation. While the existence of EFX allocations is known for identical additive valuations from \cite{PlautRoughgarden20}, to the best of our knowledge this proof is new.

\begin{proposition} \label{prop:identical_valuations}
Consider an instance with identical valuations, where each good $g \in M$ has a common value $w_g > 0$.
For an allocation $A = (A_1, \ldots, A_n)$, define 
\begin{align}  \label{eq:def_Phi_and_aux}
\Phi(A) \;:=\; \sum_{i=1}^n \left( Y_i - \mu \right)^2, \quad \text{where } Y_i := \sum_{g\in A_i} w_g \; \mbox{  and  } \; \mu := \frac{1}{n}\sum_{g\in M} w_g \,. 
\end{align}
Then every strict descent procedure on $\Phi$ 
using the single-good transfer neighborhood from \cref{def:neighbor} is  guaranteed to terminate at an EFX allocation.
\end{proposition}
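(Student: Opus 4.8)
The plan is to reduce everything to one key fact: \emph{if $A$ is not EFX, then some single-good transfer strictly decreases $\Phi$}. Granting this, the proposition follows at once. There are at most $n^m$ allocations, so $\Phi$ takes only finitely many values; hence any strictly decreasing sequence of $\Phi$-values is finite, and every strict-descent procedure terminates. Moreover it can only terminate at an allocation that admits no $\Phi$-decreasing neighbor, which by the contrapositive of the key fact must be EFX.

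To prepare the computation, I would first rewrite $\Phi$ in a handier form. Expanding the square and using $\sum_{i} Y_i = \sum_{g \in M} w_g = n\mu$ yields
\[
\Phi(A) \;=\; \sum_{i=1}^{n} Y_i^2 \;-\; n\mu^2 .
\]
Since $\mu$ depends only on the multiset $\{w_g : g \in M\}$ and not on the allocation, it is invariant under transfers; so a transfer decreases $\Phi$ exactly when it decreases $\sum_i Y_i^2$, and it suffices to track the latter.

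Now suppose $A$ is not EFX, and fix an EFX violation $(i,j,g)$ with $g \in A_j$. Under identical valuations the violation $u_i(A_j \setminus \{g\}) > u_i(A_i)$ reads $Y_j - w_g > Y_i$. Let $A'$ be the neighbor obtained by transferring $g$ from $j$ to $i$, so that $Y_i' = Y_i + w_g$, $Y_j' = Y_j - w_g$, and all other bundles are unchanged; this is a legitimate single-good move in the sense of \cref{def:neighbor}. A direct calculation gives
\[
\sum_{k} (Y_k')^2 - \sum_{k} Y_k^2 \;=\; (Y_i + w_g)^2 + (Y_j - w_g)^2 - Y_i^2 - Y_j^2 \;=\; 2 w_g\bigl(Y_i - Y_j + w_g\bigr).
\]
The violation inequality is precisely $Y_i - Y_j + w_g < 0$, and $w_g > 0$ by hypothesis, so the right-hand side is strictly negative; hence $\Phi(A') < \Phi(A)$. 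This establishes the key fact and the proposition.

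I do not expect a genuine obstacle: the substance is entirely in choosing the variance as the potential and in verifying the sign above. The two points that merit care are (i) confirming that the resolving transfer is exactly a single-good neighbor move — it is, since precisely one good changes owner — and (ii) justifying termination from finiteness of the allocation space rather than from any quantitative per-step decrease, since the $w_g$ need not be rational.
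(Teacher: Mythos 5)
Your proposal is correct and follows essentially the same route as the paper: you identify that an EFX violation $(i,j,g)$ under identical valuations means $Y_j - w_g > Y_i$, compute that transferring $g$ from $j$ to $i$ changes the potential by $2w_g(Y_i - Y_j + w_g) < 0$, and conclude termination at an EFX allocation from finiteness of the allocation space. The only (harmless) difference is that you make the invariance of $\mu$ and the reduction to $\sum_i Y_i^2$ explicit, and you spell out the finiteness argument slightly more carefully than the paper does.
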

\begin{proof}
An allocation $A$ with identical valuations is EFX if and only if 
for each pair of agents $i \neq j$ and all $g \in A_j$, we have $u_i(A_i) \geq u_i(A_j \setminus \{g\})$. Since the goods have common value, the EFX condition is equivalent to 
\begin{align}  \label{eq:EFX_condition_identical_valuations}
\sum_{h \in A_i} w_h \geq \bigl( \sum_{h \in A_j} w_h\bigr) - w_g \,.
\end{align}
Using the definition of $Y_k$ from \eqref{eq:def_Phi_and_aux}, inequality \eqref{eq:EFX_condition_identical_valuations} can be rewritten as  $Y_j - Y_i \leq w_g$.

Consider an allocation $A$ that is not EFX. Then there exists a triple $(i, j, g)$ with $i,j \in N$ and  $g \in A_j$ such that $Y_j - Y_i > w_g$. Let $A'$ be the new allocation obtained from $A$ by transferring good $g$ from agent $j$ to agent $i$. The change in the objective $\Phi$ is:
\[
\Delta\Phi = \Phi(A') - \Phi(A) = 2w_g(w_g - (Y_j - Y_i)).
\]
Since $w_g > 0$ and we assumed $Y_j - Y_i > w_g$, it follows that $w_g - (Y_j - Y_i) < 0$. Thus $\Delta\Phi < 0$, and so $\Phi(A') < \Phi(A)$.

This shows that the existence of any EFX violation at allocation $A$ guarantees that a strictly improving move (i.e. single-item transfer) is available from $A$.  A strict descent algorithm on a finite state space is guaranteed to terminate in a local minimum. Since non-EFX allocations cannot be local minima, the algorithm must terminate at an EFX allocation.
\end{proof}

\section{Discussion.}

\paragraph{EFX existence.} Our algorithm found an EFX allocation in all instances tested. This suggests not only that EFX allocations are likely to  exist for additive valuations on the distributions tested, but  that they are also computationally accessible via simple local search dynamics.
It would be interesting to better understand the computational hardness of the EFX problem as a function of the ratio $m/n$ and the degree to which the valuations are correlated.

\paragraph{Accelerating convergence.} One could investigate the effect of warm starts on the algorithm's performance. A natural choice for initialization is the welfare-maximizing allocation, where the {social welfare} of an allocation \( A \) is defined as \( SW(A) = \sum_{i \in N} u_i(A_i) \). Prior work~\cite{DGKPS14} shows that when the number of goods satisfies \( m \in \Omega(n \log n) \), the welfare-maximizing allocation is envy-free (and thus EFX) with high probability.

We observed that when items are abundant, initializing the algorithm from a welfare-maximizing allocation leads to faster convergence than starting from a random one, even when the initial allocation is not  EFX; the algorithm corrects the remaining violations quickly. While we do not report detailed experiments on this aspect, it suggests that high-welfare initializations may serve as effective warm starts when the number of items is large.

Other potential directions include exploring different neighborhood structures (e.g., 2-local moves), experimenting with alternative objective functions and annealing parameters (such as temperature decay schedules), and adding a regularization term to the objective.

\paragraph{Comparison with round robin.} We also compared to the round robin algorithm, which is a sequential method where agents take turns making a greedy choice---selecting their most valued item from the available pool---while following a fixed, cyclical order (e.g., $1, \ldots, n$) until all items are allocated. Round robin  fails to produce an EFX allocation in many instances, especially for correlated valuations.

\section{Acknowledgements} 
The author would like to thank Ruta Mehta for helpful feedback. The author also acknowledges using large language models (ChatGPT and Gemini) while brainstorming to explore candidate objective/potential functions.

\bibliographystyle{alpha}
\bibliography{EFX_bib}

\newcommand{\etalchar}[1]{$^{#1}$}
\begin{thebibliography}{CKM{\etalchar{+}}19}

\bibitem[AAB{\etalchar{+}}23]{AmanatidisABFLMVW23}
Georgios Amanatidis, Haris Aziz, Georgios Birmpas, Aris Filos{-}Ratsikas, Bo~Li, Herv{\'{e}} Moulin, Alexandros~A. Voudouris, and Xiaowei Wu.
\newblock Fair division of indivisible goods: Recent progress and open questions.
\newblock {\em Artif. Intell.}, 322:103965, 2023.

\bibitem[AAC{\etalchar{+}}25]{akrami2024efx}
Hannaneh Akrami, Noga Alon, Bhaskar~Ray Chaudhury, Jugal Garg, Kurt Mehlhorn, and Ruta Mehta.
\newblock Efx: A simpler approach and an (almost) optimal guarantee via rainbow cycle number.
\newblock {\em Operations Research}, 73(2):738--751, 2025.

\bibitem[AFRV20]{Amanatidis20}
Georgios Amanatidis, Aris Filos-Ratsikas, and Alexandros~A. Voudouris.
\newblock On the pareto frontier of envy-freeness and nash social welfare.
\newblock In {\em Proceedings of the 29th International Joint Conference on Artificial Intelligence (IJCAI)}, pages 35--41, 2020.

\bibitem[BCF{\etalchar{+}}22]{berger2022almost}
Niv Berger, Ariel Cohen, Michal Feldman, Amos Fiat, and Gleb Polevoy.
\newblock Almost envy-freeness with arbitrary valuations.
\newblock In {\em Proceedings of the 2022 ACM-SIAM Symposium on Discrete Algorithms (SODA)}, pages 548--567. SIAM, 2022.

\bibitem[BF22]{BF22}
Moshe Babaioff and Uriel Feige.
\newblock Fair shares: Feasibility, domination and incentives.
\newblock In {\em Proceedings of the 23rd ACM Conference on Economics and Computation}, EC '22, page 435, New York, NY, USA, 2022. Association for Computing Machinery.

\bibitem[BFHN24]{BFRV24}
Yakov Babichenko, Michal Feldman, Ron Holzman, and Vishnu~V. Narayan.
\newblock Fair division via quantile shares.
\newblock In {\em Proceedings of the 56th Annual ACM Symposium on Theory of Computing}, STOC 2024, page 1235–1246, New York, NY, USA, 2024. Association for Computing Machinery.

\bibitem[BHM15]{BHM15}
Simina Br{\^a}nzei, Hadi Hosseini, and Peter~Bro Miltersen.
\newblock Characterization and computation of equilibria for indivisible goods.
\newblock In {\em International Symposium on Algorithmic Game Theory}, pages 244--255. Springer, 2015.

\bibitem[BHPV24]{Psomas_EFX_stochastic}
Gerdus Benade, Daniel Halpern, Alexandros Psomas, and Paritosh Verma.
\newblock On the existence of envy-free allocations beyond additive valuations.
\newblock In {\em Proceedings of the 25th ACM Conference on Economics and Computation}, EC '24, page 1287, New York, NY, USA, 2024. Association for Computing Machinery.

\bibitem[BL14]{BL14}
Sylvain Bouveret and Michel Lema\^{\i}tre.
\newblock Characterizing conflicts in fair division of indivisible goods using a scale of criteria.
\newblock In {\em Proceedings of the 2014 International Conference on Autonomous Agents and Multi-Agent Systems}, AAMAS '14, page 1321–1328, Richland, SC, 2014. International Foundation for Autonomous Agents and Multiagent Systems.

\bibitem[BLL{\etalchar{+}}21]{BEI2021103436}
Xiaohui Bei, Zihao Li, Jinyan Liu, Shengxin Liu, and Xinhang Lu.
\newblock Fair division of mixed divisible and indivisible goods.
\newblock {\em Artificial Intelligence}, 293:103436, 2021.

\bibitem[BNTC21]{babaioff2021competitive}
Moshe Babaioff, Noam Nisan, and Inbal Talgam-Cohen.
\newblock Competitive equilibrium with indivisible goods and generic budgets.
\newblock {\em Mathematics of Operations Research}, 46(1):382--403, 2021.

\bibitem[Bra25]{code_github}
Simina Branzei.
\newblock Jupyter notebook with implementation of the algorithm, 2025.
\newblock https://github.com/domnica/EFX.

\bibitem[Bud11]{budish2011combinatorial}
Eric Budish.
\newblock The combinatorial assignment problem: Approximate competitive equilibrium from equal incomes.
\newblock {\em Journal of Political Economy}, 119(6):1061--1103, 2011.

\bibitem[CGM20]{chaudhury2020efx}
Bhaskar~Ray Chaudhury, Jugal Garg, and Kurt Mehlhorn.
\newblock Efx exists for three agents.
\newblock In {\em Proceedings of the 21st ACM Conference on Economics and Computation}, pages 1--19, 2020.

\bibitem[CGM21]{chaudhury2021little}
Bhaskar~Ray Chaudhury, Jugal Garg, and Ruta Mehta.
\newblock A little charity guarantees almost envy-freeness.
\newblock In {\em Proceedings of the 2021 IEEE 62nd Annual Symposium on Foundations of Computer Science (FOCS)}, pages 1044--1055. IEEE, 2021.

\bibitem[CKM{\etalchar{+}}19]{Caragiannis19}
Ioannis Caragiannis, David Kurokawa, Herv{\'e} Moulin, Ariel~D. Procaccia, Nisarg Shah, and Junxing Wang.
\newblock The unreasonable fairness of maximum nash welfare.
\newblock In {\em Proceedings of the 2019 ACM Conference on Economics and Computation (EC)}, pages 305--322. ACM, 2019.

\bibitem[CSHS24]{CSegalWarut_weighted_fairness}
Mithun Chakraborty, Erel Segal-Halevi, and Warut Suksompong.
\newblock Weighted fairness notions for indivisible items revisited.
\newblock {\em ACM Trans. Econ. Comput.}, 12(3), September 2024.

\bibitem[DGK{\etalchar{+}}14]{DGKPS14}
John~P. Dickerson, Jonathan Goldman, Jeremy Karp, Ariel~D. Procaccia, and Tuomas Sandholm.
\newblock The computational rise and fall of fairness.
\newblock In {\em Proceedings of the Twenty-Eighth AAAI Conference on Artificial Intelligence}, AAAI'14, page 1405–1411. AAAI Press, 2014.

\bibitem[GHL{\etalchar{+}}23]{GHLVV23}
Jugal Garg, Edin Husi\'{c}, Wenzheng Li, L\'{a}szl\'{o}~A. V\'{e}gh, and Jan Vondr\'{a}k.
\newblock Approximating nash social welfare by matching and local search.
\newblock In {\em Proceedings of the 55th Annual ACM Symposium on Theory of Computing}, STOC 2023, page 1298–1310, New York, NY, USA, 2023. Association for Computing Machinery.

\bibitem[HGNV25]{HVGN025}
Vishwa~Prakash HV, Pratik Ghosal, Prajakta Nimbhorkar, and Nithin Varma.
\newblock {EFX} exists for three types of agents.
\newblock In Itai Ashlagi and Aaron Roth, editors, {\em Proceedings of the 26th {ACM} Conference on Economics and Computation, {EC} 2025, Stanford University, Stanford, CA, USA, July 7-10, 2025}, pages 101--128. {ACM}, 2025.

\bibitem[IKSY24]{IKSY24}
Ayumi Igarashi, Naoyuki Kamiyama, Warut Suksompong, and Sheung~Man Yuen.
\newblock Reachability of fair allocations via sequential exchanges.
\newblock {\em Algorithmica}, 86(12):3653–3683, September 2024.

\bibitem[LMMS04]{lipton2004approximately}
Richard~J Lipton, Evangelos Markakis, Elchanan Mossel, and Amin Saberi.
\newblock On approximately fair allocations of indivisible goods.
\newblock {\em Proceedings of the 5th ACM Conference on Electronic Commerce (EC)}, pages 125--131, 2004.

\bibitem[MS20]{manurangsi2020envy}
Pasin Manurangsi and Warut Suksompong.
\newblock When do envy-free allocations exist?
\newblock {\em SIAM Journal on Discrete Mathematics}, 34(3):1505--1521, 2020.

\bibitem[NTV25]{CERI25}
Th\`{a}nh Nguyen, Alexander Teytelboym, and Shai Vardi.
\newblock Efficiency, envy and incentives in combinatorial assignment.
\newblock In {\em Proceedings of the 26th ACM Conference on Economics and Computation}, EC '25, page 476, New York, NY, USA, 2025. Association for Computing Machinery.

\bibitem[OPR14]{OPR14}
Abraham Othman, Christos Papadimitriou, and Aviad Rubinstein.
\newblock The complexity of fairness through equilibrium.
\newblock In {\em Proceedings of the Fifteenth ACM Conference on Economics and Computation}, EC '14, page 209–226, New York, NY, USA, 2014. Association for Computing Machinery.

\bibitem[Pap91]{papadimitriou1991selecting}
Christos~H Papadimitriou.
\newblock On selecting a satisfying truth assignment.
\newblock In {\em Proceedings of the 32nd Annual Symposium of Foundations of Computer Science}, pages 163--169. IEEE Computer Society, 1991.

\bibitem[PR20]{PlautRoughgarden20}
Benjamin Plaut and Tim Roughgarden.
\newblock Almost envy-freeness with general valuations.
\newblock In {\em Proceedings of the 34th AAAI Conference on Artificial Intelligence (AAAI)}, pages 2140--2147, 2020.

\bibitem[Sch99]{schoning1999probabilistic}
T~Schoning.
\newblock A probabilistic algorithm for k-sat and constraint satisfaction problems.
\newblock In {\em 40th Annual Symposium on Foundations of Computer Science (Cat. No. 99CB37039)}, pages 410--414. IEEE, 1999.

\bibitem[SKC{\etalchar{+}}93]{selman1993local}
Bart Selman, Henry~A Kautz, Bram Cohen, et~al.
\newblock Local search strategies for satisfiability testing.
\newblock {\em Cliques, coloring, and satisfiability}, 26:521--532, 1993.

\end{thebibliography}

\newpage 

\appendix  

\section{Appendix}

\begin{figure}
\centering
    \begin{subfigure}[b]{\textwidth}
        \centering
        \includegraphics[width=\textwidth]{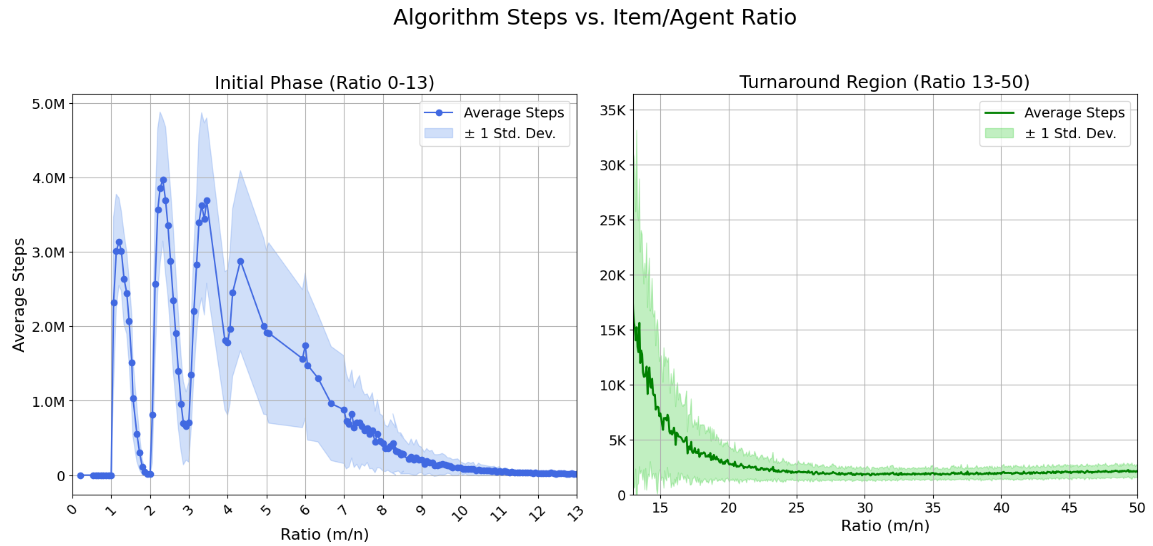} 
        \label{fig:15_agents_closeup_steps}
    \end{subfigure}\\ 
    \begin{subfigure}[b]{0.55\textwidth}
        \centering
        \includegraphics[width=\textwidth]{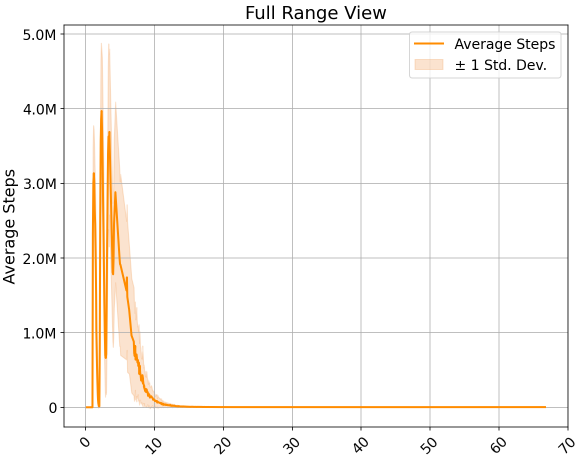} 
        \label{fig:15_agents_fullview_steps}
    \end{subfigure}
\caption{Average number of steps  vs. the item-to-agent ratio 
($m/n$) for finding an EFX allocation with 
$n=15$ agents. A step counts a neighboring allocation of the current one that was generated by the algorithm. Each data point shows the average number of steps taken by the algorithm over 100 independent instances where agent valuations were drawn uniformly at random. The shaded region indicates one standard deviation. The figure presents three panels: a close-up of the region $m/n \in [0, 13]$, a second close-up of the region $m/n \in [13, 50]$, and a full-range view.}
\label{fig:visualization_15_agents_critical_region_steps}
\end{figure}

In the next two tables we include the runtimes and number of steps required to find an EFX allocation for $n=8$ agents and $m=160$ items with correlated valuations.
\begin{table}[h!]
    \centering
    \footnotesize 
    \setlength{\tabcolsep}{3pt} 
    
    \newcommand{\headerstrut}{\rule[-.5em]{0pt}{1.5em}}

    \begin{tabular}{|l||c|c|c|c|c|c|c|c|c|c|c|c|}
        \hline
        & \multicolumn{12}{c|}{\headerstrut Correlation Strength} \\
        \cline{2-13}
         Runtimes (s) & \headerstrut 0.0 & 0.1 & 0.2 & 0.3 & 0.4 & 0.5 & 0.6 & 0.7 & 0.8 & 0.9 & 0.95 & 1.0 \\
        \hline
        \headerstrut \makecell[l]{Mean  $\pm$ SD} & \makecell{0.13\\$\pm$0.05} & \makecell{0.16\\$\pm$0.07} & \makecell{0.19\\$\pm$0.07} & \makecell{0.24\\$\pm$0.11} & \makecell{0.34\\$\pm$0.19} & \makecell{0.54\\$\pm$0.36} & \makecell{1.22\\$\pm$0.81} & \makecell{3.60\\$\pm$2.99} & \makecell{18.69\\$\pm$20.83} & \makecell{185.55\\$\pm$274.18} & \makecell{673.91\\ $\pm$ 3457.02}  & \makecell{49.88 \\ $\pm$ 63.43} \\
        \hline 
        \headerstrut \makecell[l]{Median} & 0.12 & 0.14 & 0.18 & 0.21 & 0.29 & 0.46 & 1.05 & 2.89 & 11.44 & 75.62 & 140.41 & 22.75 \\
        \hline
    \end{tabular}
    \caption{
        Runtime (in seconds) for finding an EFX allocation with $n=8$ agents and $m=160$ items. The first data row shows the mean $\pm$ std. dev.; the second shows the median. For each correlation strength, the results are averaged over 100 trials.    }
    \label{tab:correlated_runtimes} 
\end{table}

\begin{table}[h!]
    \centering
    \footnotesize
    \setlength{\tabcolsep}{3pt}
    
    \newcommand{\headerstrut}{\rule[-.5em]{0pt}{1.5em}}

    \begin{tabular}{|l||c|c|c|c|c|c|c|c|c|c|c|c|}
        \hline
        & \multicolumn{12}{c|}{\headerstrut Correlation Strength} \\
        \cline{2-13}
         Steps & \headerstrut 0.0 & 0.1 & 0.2 & 0.3 & 0.4 & 0.5 & 0.6 & 0.7 & 0.8 & 0.9 & 0.95 & 1 \\
        \hline
        \headerstrut \makecell[l]{Mean \\ $\pm$ SD} & \makecell{540\\$\pm$205} & \makecell{647\\$\pm$282} & \makecell{783\\$\pm$306} & \makecell{978\\$\pm$463} & \makecell{1422\\$\pm$784} & \makecell{2270\\$\pm$1493} & \makecell{5093\\$\pm$3399} & \makecell{15002\\$\pm$12464} & \makecell{77906\\$\pm$86746} & \makecell{774k\\$\pm$1.15M} &  \makecell{2.81M\\ $\pm$ 14.44M} & \makecell{205k \\ $\pm$ 263k } \\
        \hline 
        \headerstrut \makecell[l]{Median} & 483 & 563 & 742 & 889 & 1198 & 1871 & 4352 & 12221 & 48082 & 314186 & 585780 & 90008 \\
        \hline
    \end{tabular}
    \caption{
        Number of algorithmic steps (neighboring allocations generated) required to find an EFX allocation with $n=8$ agents and $m=160$ items. The first data row shows the mean $\pm$ std. dev.; the second shows the median. For each correlation strength, the results are averaged over 100 trials. We use 'k' for thousands and 'M' for millions.
    }
    \label{tab:correlated_steps}
\end{table}

In the next two tables we include the runtimes and number of steps required to find an EFX allocation for $n=15$ agents and $m=52$ items with correlated valuations.

\begin{table}[h!]
    \centering
    \footnotesize 
    \setlength{\tabcolsep}{3pt} 
    
    \newcommand{\headerstrut}{\rule[-.5em]{0pt}{1.5em}}

    \begin{tabular}{|l||c|c|c|c|c|c|c|c|c|c|c|c|}
        \hline
        & \multicolumn{12}{c|}{\headerstrut Correlation Strength} \\
        \cline{2-13}
         Runtimes & \headerstrut 0.0 & 0.1 & 0.2 & 0.3 & 0.4 & 0.5 & 0.6 & 0.7 & 0.8 & 0.9 & 0.95 & 1.0 \\
        \hline
        \headerstrut \makecell[l]{Mean  \\ $\pm$ SD} & \makecell{446.83\\$\pm$151.79} & \makecell{435.16\\$\pm$152.13} & \makecell{376.23\\$\pm$160.83} & \makecell{309.64\\$\pm$159.77} & \makecell{211.43\\$\pm$134.80} & \makecell{138.72\\$\pm$95.93} & \makecell{96.46\\$\pm$73.91} & \makecell{58.25\\$\pm$53.93} & \makecell{60.52\\$\pm$78.80} & \makecell{133.79\\$\pm$1093.61} & \makecell{18.75\\ $\pm$ 26.64}  & \makecell{24.79 \\ $\pm$ 38.23} \\
        \hline 
        \headerstrut \makecell[l]{Median} & 461.51 & 454.35 & 393.66 & 313.25 & 191.05 & 130.71 & 89.08 & 44.21 & 30.35 & 15.16 & 8.20 & 10.28 \\
        \hline
    \end{tabular}
    \caption{
       Runtime (in seconds) for finding an EFX allocation with $n=15$ agents and $m=52$ items. The first data row shows the mean $\pm$ std. dev.; the second shows the median. For each correlation strength, the results are averaged over 100 trials.}
    \label{tab:correlated_runtimes_15_agents_52_items} 
\end{table}

\begin{table}[h!]
    \centering
    \footnotesize 
    \setlength{\tabcolsep}{3pt} 
    
    \newcommand{\headerstrut}{\rule[-.5em]{0pt}{1.5em}}

    \begin{tabular}{|l||c|c|c|c|c|c|c|c|c|c|c|c|}
        \hline
        & \multicolumn{12}{c|}{\headerstrut Correlation Strength} \\
        \cline{2-13}
         Steps & \headerstrut 0.0 & 0.1 & 0.2 & 0.3 & 0.4 & 0.5 & 0.6 & 0.7 & 0.8 & 0.9 & 0.95 & 1.0 \\
        \hline
        \headerstrut \makecell[l]{Mean \\ $\pm$ SD} & \makecell{3.41M\\$\pm$ 1.17M} & \makecell{3.29M \\$\pm$ 1.15M} & \makecell{2.94M\\$\pm$ 1.26M} & \makecell{2.41M \\$\pm$ 1.24M} & \makecell{1.64M\\$\pm$1.05M} & \makecell{1.07M\\$\pm$743k} & \makecell{745k\\$\pm$570k} & \makecell{450k\\$\pm$417k} & \makecell{464k\\$\pm$603k} & \makecell{1.02M\\$\pm$8.38M} & \makecell{144k\\ $\pm$ 203k}  & \makecell{190k \\ $\pm$ 294k} \\
        \hline 
        \headerstrut \makecell[l]{Median} & 3.56M & 3.42M & 3.09M & 2.45M & 1.49M & 1.01M & 692k & 340k & 233k & 117k & 63k & 79k \\
        \hline
    \end{tabular}
    \caption{Number of algorithmic steps (neighboring allocations generated) required to find an EFX allocation with $n=15$ agents and $m=52$ items. The first data row shows the mean $\pm$ std. dev.; the second shows the median. For each correlation strength, the results are averaged over 100 trials.  We use 'k' for thousands and 'M' for millions.}
    \label{tab:correlated_steps_15_agents_52_items} 
\end{table}

\end{document}